\newtheorem{definition}{Definition}
\newtheorem{proposition}{Proposition}
\begin{document}
\title{Energy-Efficient Spectrum Sharing in Relay-Assisted Cognitive Radio Systems}
\author{\authorblockN{Mariem Mhiri\authorrefmark{1}, Karim Cheikhrouhou\authorrefmark{1}, Abdelaziz Samet\authorrefmark{1}, Fran\c{c}ois M\'{e}riaux\authorrefmark{2} and Samson Lasaulce\authorrefmark{2}} 
\vspace{0.2cm}
\authorblockA{\authorrefmark{1}Tunisia Polytechnic School P.B. 743-2078, University of Carthage, La Marsa, Tunisia\\
\small{\{mariem.mhiri, karim.cheikhrouhou\}@gmail.com, abdelaziz.samet@ept.rnu.tn}}
\vspace{0.2cm}
\authorblockA{\authorrefmark{2}L2S - CNRS - SUPELEC, F-91192 Gif-sur-Yvette, University of Paris-Sud, France\\
\small{\{meriaux, lasaulce\}@lss.supelec.fr}
}}


\IEEEpeerreviewmaketitle
\IEEEspecialpapernotice{(Invited Paper)}
\maketitle
\selectlanguage{english}
\begin{abstract}
This work characterizes an important solution concept of a relevant spectrum game. Two energy-efficient sources communicating with their respective destination compete for an extra channel brought by a relay charging the used bandwidth through a pricing mechanism. This game is shown to possess a unique Nash bargaining solution, exploiting a time-sharing argument. This Pareto-efficient solution can be implemented by using a distributed optimization algorithm for which each transmitter uses a simple gradient-type algorithm and alternately updates its spectrum sharing policy. Typical numerical results show to what extent spectral efficiency can be improved in a system involving selfish energy-efficient sources.

\emph{Keywords:} resource allocation, cognitive radio, cooperative transmission, Nash bargaining solution.
\end{abstract}
\section{Introduction}
Designing spectrally efficient communication systems has always been, and still is, a critical issue in wireless networks. The need for energy-efficient terminals at both the mobile and fixed infrastructure sides is more recent but becomes stronger and stronger. This paper precisely considers both aspects. More specifically, the main goal is to determine an energy-efficient operating point of a given distributed communication system at which the spectrum is efficiently used. We do not pretend to solve this tough issue for general distributed multiuser channels. Rather, we  show that it is possible to fully determine such an operating point in one possible scenario which has already been considered in the literature \cite{Cong2008}. This scenario is as follows. We consider an initial communication system comprising two point-to-point communications which use orthogonal channels (say in the frequency domain); a half-duplex relay using a dedicated band is added to the system in order to help the two transmitters to improve their energy-efficiency; the relay implements a pricing mechanism which is directly related to the amount of band used for relaying. The energy-efficiency metric under consideration is a quantity in bit correctly decoded per Joule and is defined as in \cite{Goodman2000}. The situation where each transmitter aims at selfishly maximizing its individual energy-efficiency (with pricing) by allocating bandwidth on the extra channel on which the relay operates has been considered in \cite{Cong2008}~; the solution concept considered therein is the Nash equilibrium (NE), which is shown to be unique but not Pareto efficient. Our goal is to consider another solution concept for this spectrum allocation game of interest namely, the Nash bargaining solution (NBS), motivated by the need to design efficient solutions in distributed wireless networks. Remarkably, such a solution exists for the considered scenario, is unique, and can be implemented in a decentralized manner according to the conjugate gradient algorithm. This confirms the relevance of this approach which has also been adopted in other contexts such as \cite{Larsson2008} (NBS for power allocation games where transmission rates are optimized with no relay and pricing), \cite{Ma2011} (wireless sensors are energy-efficiently coordinated by the Raiffa-Kalai-Smorodinsky solution  to communicate with a unique fusion center), or \cite{Zhang2008} (multiple access channels without pricing are considered). Compared to these references, the present work makes a step towards implementing an efficient solution in a decentralized manner, which is known to be a challenging task \cite{Lasaulce-Tutorial-09}\cite{Lasaulce2011}. The algorithm proposed in this paper is decentralized in the sense of the decision but not in terms of channel state information (CSI), leaving this issue as a non-trivial extension of this work.\\
This paper is structured as follows. In the next section, we introduce the system model as well as the spectrum allocation game. In section \ref{sec:Region}, we analyze the NBS and present the decentralized algorithm. In section \ref{sec:NR}, numerical results are presented and discussed. Concluding remarks are proposed in section \ref{sec:conc}.

\section{Problem statement}
\label{sec:SM}

\subsection{System model}\label{sec:system-model}

The communication system under study is represented in Fig. \ref{fig:system-model}. Source/transmitter $i \in \{1,2\}$ sends a signal $\sqrt{p_i} x_i$ with power $p_i$ over two quasi-static (block fading) links: the link from source $i$ to destination $i$ whose channel gain is $h_{ii} \in \mathbb{C}$ and the one from the source $i$ to the relay $r$ whose channel gain is $h_{ir} \in \mathbb{C}$. The total band associated with those two links is $\omega$ and the extra band allocated by source $i$ to communicate with the relay is denoted by $\omega_i \in [0, \omega]$. The extra band available is precisely that offered by the relay. The relay operates in a half-duplex mode and is assumed to implement an amplify-and-forward (AF) protocol. Time is divided into blocks on which all channel gains are assumed to be fixed. Each block is divided into two sub-blocks \cite{Laneman2004}. Over the first sub-block (first phase), only the source can transmit and the signals received by the destination and relay nodes are given by:
\begin{equation}
\left\{
\begin{array}{lcl}
y_{ii} & = &  h_{ii} \sqrt{p_{i}}  x_{i} + n_{ii},  \\
y_{ir} & = &  h_{ir} \sqrt{p_{i}} x_{i} + n_{ir},
\end{array}
\right.
\label{1}
\end{equation}
where $n_{ii}$ and $n_{ir}$ are (complex) additive white Gaussian noises (AWGN) with mean $0$ and variance $\sigma^{2}$. Following the relevant choice of \cite{Laneman2004}, only the relay is assumed to transmit over the second sub-block (second phase):
\begin{equation}
y_{ri} = h_{ri}  \sqrt{p_{r}} x_{ri}+n_{ri},
\end{equation}
where $h_{ri}\in \mathbb{C}$ is the channel gain between the relay and destination $i$, $n_{ri}\sim \mathcal{N}(0,\sigma^{2})$, and $x_{ri}$ is the transmitted signal from the relay to destination $i$ and, under the assumption made in terms of relaying protocol, expresses as: 
\begin{equation}
x_{ri} = \frac{y_{ir}}{\mid y_{ir} \mid}.
\label{2}
\end{equation}
In the first transmission phase, the signal-to-noise ratio (SNR) associated with the source-destination channel $i$ merely writes as:
\begin{equation}
\gamma_{i,i}  =  \displaystyle{\frac{p_{i}|h_{ii}|^2}{\sigma^{2}}}.
\end{equation}
The SNR associated with the second transmission phase is given by \cite{Cong2008,Laneman2004}:
\begin{equation}
\gamma_{r,i}  =  \displaystyle{\frac{p_{i} p_r |h_{ir}|^2|h_{ri}|^2}{\sigma^{2}(p_{i}
|h_{ir}|^2+p_{r}|h_{ri}|^2+\sigma^{2})}}.
\end{equation}
Interestingly, as proven by \cite{Laneman2004}, when using maximal-ratio combining, the equivalent SNR corresponding to the AF protocol can be written in a simple form if the outage probability is the metric of interest. This writes as: $\gamma_{i,i}^{AF} = \gamma_{i,i} + \gamma_{r,i}$.

\begin{figure}[!h]
\centering
\subfloat[Direct transmission.]{\label{fig:direct}\includegraphics[scale=0.6]{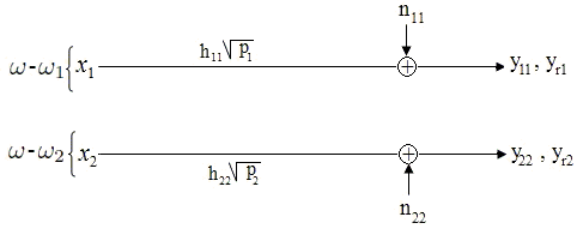}}\\
\subfloat[Cooperative transmission.]{\label{fig:relay}\includegraphics[scale=0.6]{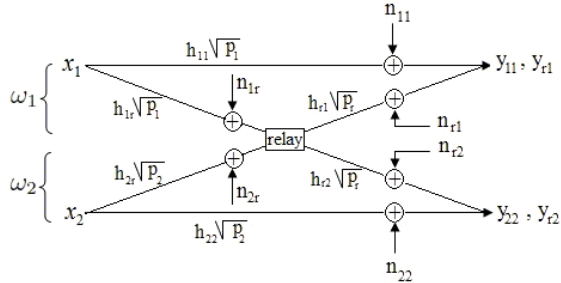}}\\
\caption{System model.}
\label{fig:system-model}
\end{figure}

\subsection{Strategic form of the spectrum allocation game}
\label{sec:game-def}

As motivated in \cite{Cong2008}, the spectrum allocation problem can be modeled by a strategic form game (see e.g., \cite{Lasaulce2011}).
\begin{definition}
The game is defined by the ordered triplet $\mathcal{G} = \bigl(\mathcal{K}, (\mathcal{S}_i)_{i \in \mathcal{K}}, (u_i)_{i \in \mathcal{K}} \bigr)$ where
\begin{itemize}
\item \textit{$\mathcal{K}$ is the set of players. Here, the players of the game are the two sources/transmitters, $\mathcal{K} = \{1,2\}$};
\item \textit{$\mathcal{S}_i$ is the set of actions/strategies. Here, the strategy of source/transmitter $i$ consists in choosing $\omega_{i}$ in its strategy set $\mathcal{S}_{i} = [0, \omega]$};
\item \textit{$u_i$ is the utility function of each user. It is given by:}
\end{itemize}
\begin{eqnarray}
u_{i}(\omega_1, \omega_2) & = & \displaystyle{\alpha \frac{(\omega-\omega_{i})}{p_{i}}f(\gamma_{i,i}) + \alpha \frac{\omega_{i}}{(p_{i}+p_{r})}f(\gamma_{i,i}^{AF})}\nonumber \\
		  & & \displaystyle{-b\left(\sum_{j=1}^{2}\omega_{j}\right)\omega_{i}},
\label{16}
\end{eqnarray}
where $\alpha$ defines the spectral efficiency (in bit/s per Hz), $f : \left[0, +\infty \right) \rightarrow [0,1]$ is a sigmoidal efficiency function which can correspond to the packet success rate or probability of having no outage (emphasizing the link between energy-efficiency and the outage analysis conducted in \cite{Laneman2004}). The parameter $b$ is a (linear) pricing factor.


\end{definition}
 As explained in \cite{Cong2008}, the presence of the factor $b \geq 0$ amounts to imposing a cost to the sources for using the relay; this cost is assumed to be proportional to the relaying band used. The first term of the utility function corresponds to the ratio between the goodput (net rate in bit/s) to the cost in terms of power (in J/s) for the direct link alone (whose band width equals $\omega-\omega_i$), whereas the second term corresponds to the aggregated effects of the direct transmission and the relayed transmission (whose bandwidth equals $\omega_i$). This game is concave in the sense of Rosen and has a unique pure NE (see e.g., \cite{Lasaulce2011}). The main problem is that the NE can be very inefficient as there exist some operating points at which both transmitters have better utilities. This motivates the study of more efficient solutions such as the NBS. The NBS analysis, the design of a simple distributed optimization algorithm to implement it, and proving its relevance in terms of performance constitute the main results of this paper.

\section{Nash bargaining solution analysis}
\label{sec:Region}

The objective of this section is to characterize the NBS of the game $\mathcal{G}$ and to propose a simple distributed optimization algorithm for implementation since the function of interest to optimize can be checked to be strictly concave under certain operation conditions explicated in Sec.~\ref{sec:SC}. 

\subsection{Achievable utility region: Pareto boundary and convexity}
First, we study the properties of the achievable or feasible utility region, which is denoted by $\mathcal{R}$. It is defined as the region formed by all the points whose coordinates are $(u_{1},u_{2})$ that is:
\begin{equation}
\mathcal{R} = \{\left(u_{1},u_{2}\right)|\left(\omega_{1},\omega_{2}\right)\in \left[0,\omega\right]^2\}.
\label{19}
\end{equation}
For a given channel configuration or block of data (i.e., the $h_{ij}$ are given), the region $\mathcal{R}$ is compact \cite{Larsson2008}, which follows from the compactness
of $\mathcal{S}_i$ and the continuity of $u_{i}$. However, it is not always convex. This prevents one from using bargaining theory which is based on the convexity of the achievable utility region. It turns out that, in the problem under consideration, it is relevant to exploit time-sharing (as done in \cite{Larsson2008} for Shannon-rate efficient allocation games on the interference channel), which convexifies the utility region. Indeed, the main idea is to assume that coordination in time is part of the sought solution. The new utility region is:
\begin{equation}
\begin{aligned}
\bar{\mathcal{R}} = &\{\left(\mu u_{1}+(1-\mu)u_{1}',\mu u_{2}+(1-\mu)u_{2}'\right)\\
&|0 \leq \mu \leq 1,\; \left(u_{1},u_{2}\right)\in \mathcal{R},\; \left(u_{1}',u_{2}'\right)\in \mathcal{R}\}.
\end{aligned}
\label{20}
\end{equation}
During a fraction $\mu$ of the time, the users use $\left(\omega_{1},\omega_{2}\right)$ to have $\left(u_{1},u_{2}\right)$. During a fraction $(1-\mu)$ of the time, they use another combination of bandwidths $\left(\omega_{1}',\omega_{2}'\right)$ to obtain $\left(u_{1}',u_{2}'\right)$. Note that this region includes several points of interest. First, it includes $\left(u_{1}^{NE},u_{2}^{NE}\right)$ which is the point corresponding to the unique pure NE of $\mathcal{G}$. Second, it includes the two points for which the sources or transmitters do not exploit the relay at all: $\omega_{1}=0$ or $\omega_{2}=0$. Let $\bar{\mathcal{R}}^{\ast}$ be the Pareto boundary of the convex hull $\bar{\mathcal{R}}$. Fig. \ref{fig:achievableregion} illustrates different operating points as well as the achievable utility region $\mathcal{R}$ and the Pareto boundary $\bar{\mathcal{R}}^{\ast}$. The other elements shown in this figure are defined next.
\begin{figure}[!h]
\centering
\includegraphics[scale=0.5]{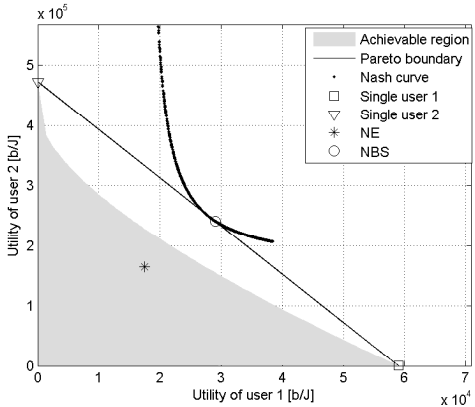}
\caption{The achievable utility region plus some key operating points.}
\label{fig:achievableregion}
\end{figure}
\subsection{Existence and uniqueness analysis of the NBS}
The NBS can be characterized as follows:
\begin{proposition}
 In the spectrum allocation game $\mathcal{G}$, there exists a unique NBS given by:
\begin{equation}
\left(u_{1}^{NBS},u_{2}^{NBS}\right)=\max_{\left(u_{1},u_{2}\right)\in\bar{\mathcal{R}}^{+}} {(u_{1}-u_{1}^{NE})(u_{2}-u_{2}^{NE})},
\label{30}
\end{equation}
where
\begin{equation}
\bar{\mathcal{R}}^{+} = \{\left(u_{1},u_{2}\right)|u_{1}\geq u_{1}^{NE},\; u_{2}\geq u_{2}^{NE}\}.
\label{29}
\end{equation}
\end{proposition}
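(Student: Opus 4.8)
The plan is to recognize the right-hand side of \eqref{30} as the \emph{Nash product} and to cast the statement as a direct application of Nash's axiomatic bargaining theorem. The relevant bargaining problem is the pair formed by the feasible set $\bar{\mathcal{R}}$ and the disagreement (threat) point $d = (u_1^{NE}, u_2^{NE})$. Nash's theorem guarantees that whenever the feasible set is convex and compact and contains at least one point strictly dominating $d$, the bargaining solution exists, is unique, and coincides with the maximizer of $(u_1 - d_1)(u_2 - d_2)$ over the feasible points Pareto-dominating $d$. Thus the whole argument reduces to checking these hypotheses for our specific region and threat point.

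First I would verify the three structural hypotheses. Convexity of $\bar{\mathcal{R}}$ is supplied by the time-sharing construction in \eqref{20} (taking, if needed, the full convex hull so that arbitrary, not merely pairwise, convex combinations are included). Compactness follows from the compactness of $\mathcal{R}$ established earlier --- itself a consequence of the compactness of each $\mathcal{S}_i = [0,\omega]$ and the continuity of the $u_i$ --- together with the fact that the convex hull of a compact set in $\mathbb{R}^2$ is again compact. Finally, since $d$ is attained at the pure NE allocation, it belongs to $\bar{\mathcal{R}}$, so the set $\bar{\mathcal{R}}^+$ of \eqref{29} is nonempty.

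Existence is then immediate: the map $(u_1,u_2) \mapsto (u_1 - u_1^{NE})(u_2 - u_2^{NE})$ is continuous and $\bar{\mathcal{R}}^+ = \bar{\mathcal{R}} \cap \{u_1 \ge u_1^{NE}\} \cap \{u_2 \ge u_2^{NE}\}$ is compact (a closed subset of a compact set), so the maximum is attained by the Weierstrass theorem. For uniqueness I would argue via strict log-concavity: on the subset of $\bar{\mathcal{R}}^+$ where both factors are positive, the logarithm of the Nash product equals $\log(u_1 - u_1^{NE}) + \log(u_2 - u_2^{NE})$, which is strictly concave in $(u_1,u_2)$. Since $\log$ is increasing, any maximizer of the product maximizes its logarithm, and a strictly concave function over the convex set $\bar{\mathcal{R}}^+$ admits at most one maximizer; the boundary points where the product vanishes are excluded because there the product is strictly below the attained maximum.

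The \emph{main obstacle} I anticipate is precisely establishing that the product is strictly positive somewhere on $\bar{\mathcal{R}}^+$, i.e. that some feasible allocation strictly dominates the NE in both coordinates. This is the nondegeneracy condition of Nash's theorem, and it is exactly where the inefficiency of the NE --- noted just after \eqref{16} --- must be invoked: because the unique pure NE of $\mathcal{G}$ is not Pareto efficient, there is an operating point at which both transmitters strictly improve, which lifts the optimal Nash product strictly above zero and forces the maximizer into the region where both factors are positive, where the strict-concavity argument applies and delivers uniqueness.
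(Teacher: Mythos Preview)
Your proposal is correct and follows essentially the same route as the paper: both arguments cast the problem as a standard Nash bargaining problem with threat point $(u_1^{NE},u_2^{NE})$, invoke the convexity of $\bar{\mathcal{R}}$ obtained via time-sharing, and appeal to Nash's result for existence and uniqueness. Your version is in fact more complete than the paper's, which merely cites Nash's theorem and the convexification; you additionally verify compactness of the convex hull, give a direct strict log-concavity argument for uniqueness, and make explicit the nondegeneracy hypothesis (strict Pareto-domination of the NE), which the paper leaves implicit.
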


\begin{proof}
The point $\left(u_{1}^{NE},u_{2}^{NE}\right)$ defines a threat point and can always be reached, which ensures the existence of a solution to the above maximization problem.

Regarding to the uniqueness of the NBS, Nash proved that it holds under certain axioms due to the existence of the convex hull of the achievable region and the threat point, as mentioned in \cite{Larsson2008}. As we have shown that the utility region can be convexified, this solution is also unique.

Finally, the cooperative outcome (NBS) must be invariant to equivalent utility representations, symmetric, independent of irrelevant alternatives and Pareto efficient \cite{Thomson1989}. The NBS is therefore the unique solution resulting from the intersection of the Pareto boundary $\bar{\mathcal{R}}^{\ast}$ with the Nash curve which is defined as (Fig. \ref{fig:achievableregion}):
\begin{equation}
(u_1,u_2) = \arg \max_{(u_1,u_2)} \pi(u_1,u_2)
\label{31}
\end{equation}
where $\pi(u_1,u_2) = (u_{1}-u_{1}^{NE})(u_{2}-u_{2}^{NE})$ is the Nash product function.\end{proof} Since the NBS determination is on the subregion $\bar{\mathcal{R}}^{+}$, we stress that the utilities arising from the NBS are higher than those deduced with NE (see Fig. \ref{fig:achievableregion}).
\subsection{Decentralized algorithm for the NBS determination}
\label{sec:AD}
The proposed algorithm is based on the idea of determining analytically the unique maximum, which is the NBS from the resolution of the following system of equations:
\begin{equation}
\mbox{(I)}\left\{
\begin{array}{ccc}
\displaystyle{\frac{\partial \pi}{\partial \omega_{1}}}& = & 0,\\
\displaystyle{\frac{\partial \pi}{\partial \omega_{2}}}& = & 0.
\end{array}
\right.
\label{32}
\end{equation}
Mathematical resolution of such a system leads to solve two second degree polynomials in $\omega_{i}$ (for $i\in\{1,2\}$), the discriminants of which are fourth degree polynomials in $\omega_{j}$ (for $j\in\{1,2\} \setminus \{i\}$). The study of signs of the discriminants show that expressing the NBS analytically is a difficult task even by exploiting Ferrari and Cardan methods for high degree polynomials resolution. This study shows the interest in: (1) finding decentralized algorithms to compute the NBS; (2) designing distributed procedures to converge towards the NBS. The scope of this paper is about (1) and (2) but with the restriction that distribution is only performed in terms of decision and not in terms of channel state information. Instead of determining the maximum of $\pi$, we propose to find the minimum of $-\pi$, denoted after as $\pi_{m}$, by focusing on the conjugate gradient algorithm. One of the steps of this algorithm consists in determining a parameter denoted as $\beta_{k+1}$ (which is defined next). Accordingly, many methods have been introduced such as: Fletcher-Reeves, Polak-Ribi\`{e}re and Hestenes-Stiefel. Due to the efficiency of its convergence, we focus here on the second method based on calculating the Polak-Ribi\`{e}re parameter \cite{Polak1971}. The spectrum sharing policy is updated in an alternating manner, just like the iterative sequential iterative water-filling algorithm \cite{Yu2002}. However, in contrast with the latter, only the decision is distributed here and global channel state information is needed (through the Hessian matrix).
\vspace{0.3cm}\\
\begin{tabular*}{0.49\textwidth}{l}
  \hline
  \makebox[\linewidth][c]{\textbf{Algorithm $1$: Decentralized determination of the NBS}}\\
  \hline
\textbf{(1)} Set the position of the relay\\
\textbf{(2)} $\omega^{0}=(\omega_{1}^{0},\omega_{2}^{0})$ (frequency initialization)\\
\textbf{(3)} $v_{0} =-\nabla \pi_{m}(\omega^{0})$ (initialization gradient)\\
\textbf{(4)} \textbf{k=0}; \textbf{while $\|v_{k}\|>\epsilon$}\\
\textbf{a.} $t_{k} = \displaystyle{-\frac{g_{k}^{t}v_{k}}{v_{k}^{t}A_{m}v_{k}}}$ (optimal parameter with Newton method\\
where $g_{k}=\nabla \pi_{m}(\omega^{k})$, $A_{m}$ is the Hessian matrix of $\pi_{m}$ and\\
$\omega^{k}=(\omega_{1}^{k},\omega_{2}^{k})$ is the frequency bands at the $k^{th}$ iteration)\\
\textbf{b.} $ \omega^{k+1}= \omega^{k} + t_{k}v_{k}$ (new frequency bands)\\
\textbf{c.} $\omega_{1}^{k+1}=\omega^{k+1}(1)$ and $\omega_{2}^{k+1}=\omega^{k}(2)$ (alternated updates)\\
\textbf{d.} $g_{k+1}=\nabla \pi_{m}(\omega^{k+1})$ (new gradient) \\
\textbf{e.} $\beta_{k+1}=\displaystyle{\frac{g_{k+1}^{t}(g_{k+1}-g_{k})}{g_{k}^{t}g_{k}}} $ (Polak-Ribi\`{e}re parameter)\\
\textbf{f.} $v_{k+1}=-g_{k+1}+\beta_{k+1}v_{k}$ (new descent direction)\\
\textbf{k = k+1}\\
\textbf{end}\\
\textbf{(5)} $(\omega_{1}^{NBS},\omega_{2}^{NBS})=(\omega^{FI}(1),\omega^{FI}(2))$ where \textit{FI} denotes \\
\hspace{0.48cm}\textit{Final Iteration}\\
  \hline
\end{tabular*}

\subsection{Convergence of the algorithm (Strict-concavity analysis of the $\pi$ function)}
\label{sec:SC}
The proposed algorithm is ensured to converge to a NBS if $\pi$ is strictly concave. But this property is not always true. According to the previous study, the function $\pi$ is defined on the subregion $\bar{\mathcal{R}}^{+}$ which is formed by all the utilities $(u_{1},u_{2})$ verifying $u_{i}\geq u_{i}^{NE}$ for all $i \in \{1,2\}$. Such a set can be determined when the NE point is fixed. Though, for each channels values, a NE can be identified. Consequently, the subregion $\bar{\mathcal{R}}^{+}$ depends on the channels values. In the following, for given locations of sources and destinations, we show that there exists a region in which the $\pi$ function is strictly concave.

Proving the strict-concavity of $\pi$ amounts to proving the strict-negativity of the eigenvalues of its corresponding Hessian matrix, which is given by:
\begin{equation}
A=\left(
   \begin{array}{cc}
      a_{11} & a_{12} \\
      a_{21} & a_{22}
   \end{array}
   \right),
\end{equation}
where:
\begin{equation*}
\begin{array}{lcl}
a_{11} & = & \displaystyle{\frac{\partial^2 \pi}{\partial \omega_{1} ^2}}     \\
& = & -2b(u_2-u_2^{NE})-2b\omega_{2}(-\varphi_{1}+\psi_{1}-b(2\omega_{1}+\omega_{2})),\\
a_{22} & = & \displaystyle{\frac{\partial^2 \pi}{\partial \omega_{2} ^2}}     \\
& = & -2b(u_1-u_1^{NE})-2b\omega_{1}(-\varphi_{2}+\psi_{2}-b(2\omega_{2}+\omega_{1})),\\
a_{12} & = & \displaystyle{\frac{\partial^2 \pi}{\partial \omega_{1} \partial \omega_{2}}}     \\
& = & -b(u_2-u_2^{NE})-b(u_1-u_1^{NE})+b^2\omega_{1}\omega_{2}+\\
& & (\varphi_{1}-\psi_{1}+b(2\omega_{1}+\omega_{2}))(\varphi_{2}-\psi_{2}+b(2\omega_{2}+\omega_{1})),\\
a_{21} & = & \displaystyle{\frac{\partial^2 \pi}{\partial \omega_{1} \partial \omega_{2}}}     \\
& = & a_{12},
\end{array}
\label{351}
\end{equation*}
with $\varphi_{i}=\alpha f(\gamma_{i,i})/p_{i}$ and $\psi_{i}=\alpha f(\gamma_{i,i}^{AF})/(p_{i}+p_{r})$ for $i\in\{1,2\}$. Therefore, the eigenvalues are the zeros of the following polynomial:
\newcommand{\tr}{\mbox{tr}}
\begin{equation}
P : \lambda^2 - \lambda \mbox{ } \tr(A) + \det(A).
\end{equation}
If we denote $\Delta$ the discriminant of the polynomial $P$, we can verify merely that $\Delta$ is always positive. Indeed, we have:
\begin{equation}
      \Delta  = (a_{11}-a_{22})^2+4a_{12}a_{21}.
   \label{eqdelta}
\end{equation}
Since $a_{12}=a_{21}$, equation (\ref{eqdelta}) is equivalent to:
\begin{equation}
\Delta =  (a_{11}-a_{22})^2+4(a_{12})^2 \geq 0.
\end{equation}
Therefore, the eigenvalues of $A$, denoted as $\lambda_1$ and $\lambda_2$ are real and are as follows:
\begin{equation}
\left \{
   \begin{array}{lcl}
      \lambda_{1} & = & \displaystyle{\frac{\tr(A)-\sqrt{\Delta}}{2}}, \\
      \lambda_{2} & = & \displaystyle{\frac{\tr(A)+\sqrt{\Delta}}{2}}.
   \end{array}
   \right.
\end{equation}
From these expressions, we study the the strict negativity of the eigenvalues depending on the relay position in a space region $[0,700]\times[0,700]$ m\up{2}. The corresponding simulations (for the same settings considered in section \ref{sec:NR}) are given in Fig. \ref{fig:prob} in which we represent in black the region where the eigenvalues are strictly negative. We assume a standard choice for $f$ for all the numerical results provided in this paper which is $f(x)=(1-e^{-x/2})^{M}$ \cite{Goodman2000} where $M$ is the number of symbols per packet. Therefore, we can deduce that the strict concavity of function $\pi$ is ensured in a region $(x_{r},y_{r})\in[400,550]\times[400,550]$ m\up{2}.
\begin{figure}[!h]
\centering
\includegraphics[scale=0.50]{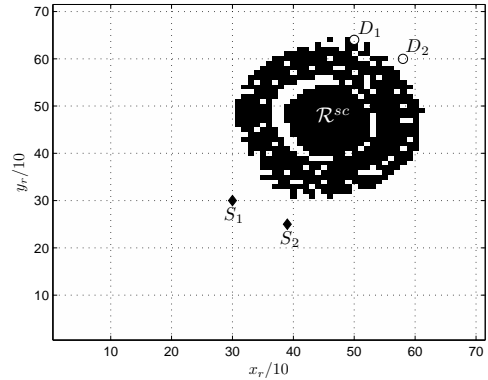}
\caption{Strict-concavity of the $\pi$ function on the disk $\mathcal{R}^{sc}$ when both eigenvalues are strictly negative.}
\label{fig:prob}
\end{figure}

                
\section{Numerical results}
\label{sec:NR}
Here, we implement the NBS and compare it to the NE \cite{Cong2008}. We consider a scenario where the coordinates (in meter) of each source/destination nodes $S_i/D_i$ are as follows: $S_{1}(300,300)$, $D_{1}(500,645)$, $S_{2}(390,257)$ and $D_{2}(590,603)$. The channel gains $|h_{ij}|^{2}$ are given by $0.097/d^{4}$ where $d$ is the distance between the transmitter and the receiver. The noise power and transmission powers of the users and relay are $10^{-13}$ Watt, $0.1$ Watt and $0.08$ Watt respectively, and $\alpha$ is set to $0.8$ bit/s per Hz. The constants $b$ and $M$ are set to $10^{-5}$ and $80$ respectively, while the bandwidth $\omega$ is fixed to $1$ MHz.
\begin{figure}[!h]
\centering
\includegraphics[scale=0.50]{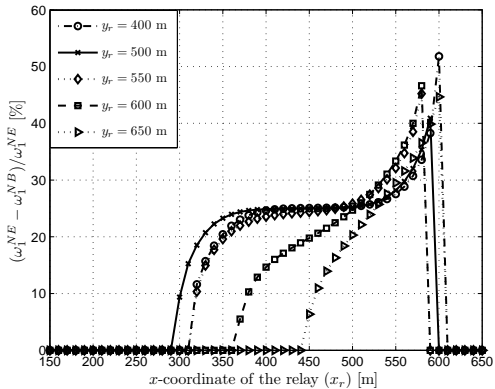}
\caption{Gains in terms of individual bandwidth for user $1$ when operating at the NBS instead of the NE.}
\label{fig:BU1}
\end{figure}
\begin{figure}[!h]
\centering
\includegraphics[scale=0.50]{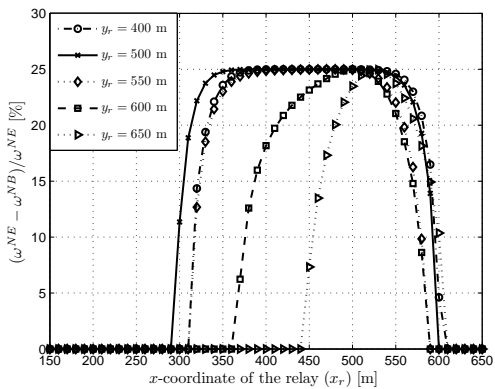}
\caption{Gains in terms of system/total bandwidth by operating at the NBS instead of the NE.}
\label{fig:DBT}
\end{figure}
\begin{figure}[!h]
\centering
\includegraphics[scale=0.50]{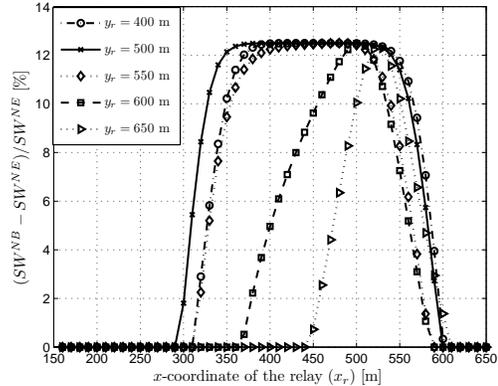}
\caption{Gains in terms of sum energy-efficiency with pricing (social welfare) by operating at the NBS instead of the NE.}
\label{fig:SW}
\end{figure}

Our results highlight that the NBS requires less bandwidth than the NE. Additionally, the energy-efficiency (with pricing) at the NBS is higher than the one at the NE. In Fig. \ref{fig:BU1}, we represent the relative bandwidth gain (NBS vs NE) in $\%$ of user 1 w.r.t. the coordinates of the relay (the relative gain of user 2 shows a similar behavior). Simulations show that maximum gains are obtained when $y_{r}\in [400,550]$ m. Moreover, for different $y_{r}$, there are some regions of $x_{r}$ where the gains vanish. In these regions, the optimum bandwidth with NBS is equal to that with NE. Since user 1 cannot profit from the presence of the relay (when this latter is far from the source), we have $\omega_{1}^{NE}=\omega_{1}^{NBS}=0$. However, user 2 maximizes its utility at the NE (when the relay is much closer to its location) and we have $\omega_{2}^{NE}=\omega_{2}^{NBS}\neq0$. This shows that our analysis provides some insights to an operator who would like to optimize the location of a relay.

In Fig.~\ref{fig:DBT}, we plot the gain in terms of total bandwidth demand. Thus, we deduce that a maximum gain with NBS is reached when the relay is positioned at $(x_{r},y_{r})\in[400,550]\times[400,550]$ m\up{2}. In this region, the total bandwidth demand is reduced to $20-25\%$. The study of the social welfare in Fig. \ref{fig:SW} confirms that the maximum energy-efficiency (with pricing) gain, which is around $10-12\%$, is reached in the same region. Consequently, the results obtained according to the strict-concavity analysis are well confirmed when implementing the conjugate gradient algorithm.

\section{Conclusion}
\label{sec:conc}
This paper studies an efficient solution for a relevant game introduced in \cite{Cong2008} by referring to the NBS. Remarkably, up to a time-sharing argument, the corresponding spectrum allocation game can be checked to possess all the properties to have a unique NBS. Through implementing a conjugate gradient algorithm in a decentralized way, considerable gains of $20-25\%$ can be obtained in terms of used bandwidth. The results reached for the two-user case are very encouraging to extend the case study to larger multi-user systems. Interestingly, our analysis gives some insights into how to deploy some relays for improving a distributed network both from a spectral and energy standpoint. This paper is a first step towards designing fully distributed algorithms (in terms of channel state information) or learning techniques which converge to efficient solutions such as the NBS or Raiffa-Kalai-Smorodinsky solution~; this task is known to be challenging and this paper shows the existence of relevant wireless scenarios where this objective might be reachable.

\bibliographystyle{IEEEbib}
\bibliography{references-3}

\end{document}